\documentclass[journal]{IEEEtran}
\usepackage{epsf}
\usepackage{graphicx}
\usepackage{amsmath,bm}
\usepackage{amssymb}
\usepackage{epsfig,latexsym,amsmath,epsf,amssymb,amsfonts}
\usepackage{verbatim}
\usepackage{placeins}
\usepackage[hang]{subfigure}
\usepackage{epstopdf}
\usepackage{multirow}
\usepackage{cite} 
\usepackage{amsthm}
\usepackage{url}
\usepackage{cases}
\usepackage[utf8]{inputenc}
\usepackage{chngcntr}
\usepackage{tabu}
\usepackage[linesnumbered,ruled]{algorithm2e}
\usepackage[justification=centering]{caption}
\usepackage{siunitx}

\counterwithin{paragraph}{section} 

\newtheorem*{lemma}{Lemma}

\newcommand{\ignore}[1]{}
\usepackage{color, colortbl}

\usepackage[belowskip=-5pt,aboveskip=0pt]{caption}


\begin{document}

\setlength{\abovedisplayskip}{3pt}
\setlength{\belowdisplayskip}{3pt}

\title{\Large 3D Placement of an Unmanned Aerial Vehicle Base Station for Maximum Coverage of Users with Different QoS Requirements}
\author{Mohamed Alzenad, Amr El-Keyi, and~Halim~Yanikomeroglu
\thanks{This work was supported by the Ministry of Higher Education and Scientific Research (MOHESR), Libya,
through the Libyan-North American Scholarship Program, and in part by TELUS Canada.}
\thanks{The authors are with the Department of Systems and Computer Engineering, Carleton University, Ottawa, Ontario, Canada. Email: \{mohamed.alzenad, amr.elkeyi, halim\}@sce.carleton.ca, M. Alzenad is also affiliated to Sirte University, Libya.}
}

\maketitle

\begin{abstract}
The need for a rapid-to-deploy solution for providing wireless cellular services can be realized by unmanned aerial vehicle base stations (UAV-BSs). To the best of our knowledge, this letter is the first in literature that studies a novel 3D UAV-BS placement that maximizes the number of covered users with different Quality-of-Service requirements. We model the placement problem as a multiple circles placement problem and propose an optimal placement algorithm that utilizes an exhaustive search (ES) over a one-dimensional parameter in a closed region. We also propose a low-complexity algorithm, namely, maximal weighted area (MWA) algorithm to tackle the placement problem.
 Numerical simulations are presented showing that the MWA algorithm performs very close to the ES algorithm with a significant complexity reduction.
\end{abstract}

\begin{IEEEkeywords}
unmanned aerial vehicles, drone, coverage, optimization.
\end{IEEEkeywords}
\section{introduction}
The concept of unmanned aerial vehicle base stations (UAV-BSs) has emerged as a rapid solution for providing wireless services \cite{IremMagazine,zeng2016wireless}. 
 The need for UAV-BSs could arise in various scenarios, for instance, during a malfunction of the terrestrial infrastructure or for the purpose of offloading traffic from a congested macro BS \cite{IremMagazine}. UAV-BSs can also play a key role for providing an energy efficient internet of things (IoT) communications where UAV-BSs can collect data from the IoT devices and forward it to other devices \cite{mozaffari2017mobile}. 
 
Despite its promising benefits, UAV-aided communication is facing many challenges. Unlike terrestrial channels, where the location of the BS is fixed, and hence the path loss depends on the location of the user,  the air-to-ground (A2G) channel model is a function of the location of the user as well as the UAV-BS. A key challenge in UAV-aided communications is where to deploy the UAV-BS. Furthermore, the UAV-BS placement is no longer a 2D placement problem as for terrestrial BSs. It is indeed a 3D placement problem. Furthermore, the energy available for powering the onboard electronics is limited because of using batteries as a source of power \cite{zeng2017energy}. Therefore, the UAV-BS may not be capable of providing a full coverage for the serving area, and only partial coverage is possible. A key challenge that is addressed in this letter is that given a limited UAV-BS transmit power and users with different quality of service (QoS) requirements, defined in terms of the received signal to noise ratio (SNR), where to deploy the UAV-BS such that the number of covered users is maximized.      

The work in \cite{lyu2017placement} proposed a polynomial-time spiral algorithm for multiple UAVs placement. The authors in \cite{Alzenad} proposed a framework for evaluating the 3D location of the UAV-BS that maximizes the number of covered users using minimum transmit power. The work in \cite{Hourani} evaluated the optimal UAV-BS altitude that maximizes the coverage region. The work in \cite{IremConf} made a further step and deployed the UAV-BS based on the locations of the users and formulated the UAV-BS placement problem as a quadratically-constraint mixed integer non-linear problem. A grid search algorithm was proposed in \cite{Elham2017Backhaul} to tackle a backhaul-aware 3D UAV-BS placement problem. The authors in \cite{ElhamVtc} developed a particle swarm optimization framework to find the minimum number of UAV-BSs and their locations to serve a particular region. A 3D UAV-BS placement for two cases, one UAV-BS and two UAV-BSs was examined in \cite{MozaffariDrone}. Furthermore, the authors in \cite{MozaffariDrone} optimized the 3D UAV-BS deployment with the aim of maximizing the coverage region with the minimum transmit power. However, the work in \cite{lyu2017placement,Alzenad,Hourani,IremConf,Elham2017Backhaul,ElhamVtc,MozaffariDrone} assumes that all the users have the same QoS requirement.

In this letter, we study a novel 3D UAV-BS placement that has not been previously addressed. 
Our work aims to maximize the number of covered users demanding different QoS requirements. We model the UAV-BS placement as a multiple circles placement problem. 
 We propose an algorithm that utilizes an exhaustive search (ES) over a one-dimensional parameter in a closed region to determine the optimal height and 2D location of the UAV-BS. In addition, we propose a low-complexity algorithm, namely maximal weighted area (MWA) algorithm to solve the placement problem. We also show by simulations that the proposed MWA algorithm performs very close to the ES algorithm with a significant complexity reduction.  
\section{system model}
We consider a congested area  containing a set of stationary or low-mobility users. The congestion  at the terrestrial BS might have occurred due to a number  of reasons including a malfunction at the BS or a temporary event such  as a festival or a sports event. Therefore, in order to relieve the  stress at the terrestrial BS, a UAV-BS is deployed for serving as many  users as possible. We assume that each user has one of $K$ different QoS requirements defined in terms of the  SNR. Let $\mathcal U$ denote the set of the users and $\mathcal U_k\subseteq\mathcal U$ is the set of the users corresponding to QoS $k$ such that $\cup_{k=1}^K \mathcal U_k=\mathcal U $. We also denote by $(x_{ik},y_{ik}),i=1,2,...|\mathcal U_k|,k=1,2,..K$, the 2D location of the user $i$ of the set $\mathcal U_k$.

As discussed in \cite{Hourani}, the A2G links 
are either line-of-sight (LoS) or non line-of-sight (NLoS) with some probability. 
Assuming a UAV-BS located at $(x_D,y_D,h)$, the path loss for the LoS and NLoS links in dB is given respectively by  
\begin{align}\label{Eq:LoSNLoSPL}
  L_\textup{LoS}&= 20\log\left(\frac{4\pi f_c d_{ik}}{c}\right)+\eta_{\textup{LoS}} \nonumber \\
  L_\textup{NLoS}&= 20\log\left(\frac{4\pi f_c d_{ik}}{c}\right)+\eta_{\textup{NLoS}},
\end{align}
where $f_c$ is the carrier frequency, $d_{ik}$ is the distance between the UAV-BS and user $i$ of $\mathcal U_k$, given by $d_{ik}=\sqrt{h^2+r_{ik}^2}$, where $r_{ik}=\sqrt{(x_{ik}-x_D)^2+(y_{ik}-y_D)^2}$. Furthermore, $\eta_{\textup{LoS}}$ and $\eta_{\textup{NLoS}}$ are the average additional losses for LoS and NLoS, respectively, and are given in \cite{Hourani}. The probability of occurrence of a LoS connection between the UAV-BS and user $i$ of set $\mathcal U_k$ located at an elevation angle $\theta_{ik}= \tan^{-1}(\frac{h}{r_{ik}})$ is given by
\begin{equation}\label{Eq:Prob}
\textup P_\textup {LoS}=\frac{1}{1+a \exp (-b(\frac{180}{\pi}\theta_{ik}-a))},
\end{equation} 
where $a$ and $b$ are constants that depend on the environment. Also, the probability of NLoS is $\textup P_\textup{NLoS}=1-\textup P_\textup{LoS}$. In this letter, we only deal with the mean path loss rather than its random behavior because BS deployment often deals with long term variations of the channel rather than small scale variations \cite{Alzenad}. Finally, the probabilistic mean path loss is given by
\begin{equation}\label{Eq:PL1}
L(h,r_{ik})=L_\textup{LoS} P_\textup{LoS}+L_\textup{NLoS} P_\textup{NLoS},
\end{equation}
which yields
\begin{equation}\label{Eq:PL2}
\resizebox{.89\columnwidth}{!}{$\displaystyle{L(h,r_{ik})=\frac{A}{1+a \exp (-b(\frac{180}{\pi}\tan^{-1}(\frac{h}{r_{ik}})-a))}+10\log(h^2+r_{ik}^2)+B}$},
\end{equation}
where $A=\eta_{\textup{LoS}}-\eta_{\textup{NLoS}}$ and $B=20\log(\frac{4\pi f_c}{c})+\eta_{\textup{NLoS}}$. Equation (\ref{Eq:PL2}) can be further rewritten as
\begin{equation}\label{Eq:PL3}
\resizebox{.89\columnwidth}{!}{$\displaystyle{L(h,r_{ik})=\frac{A}{1+a \exp (-b(\frac{180}{\pi}\theta_{ik}-a))}+20\log(\frac{r_{ik}}{\cos (\theta_{ik})})+B}$}.
\end{equation}


Let $P_t$ denote the transmit power of the UAV-BS in dB. The received power at user $i$ of the set $\mathcal U_k$ in dB is given by
\begin{equation}\label{Eq:PowRec}
P_r^{ik}=P_t-L(h,r_{ik}).
\end{equation}
In a noise limited scenario, the conventional approach to define coverage is through the SNR. The $i$th user of set $\mathcal U_k$ is covered if the probabilistic mean SNR exceeds a predefined threshold $\gamma_\textup {th}^k$ (dB). That is if
\begin{equation}
\gamma(h,r_{ik}) \,(\textup {dB})=P_ r^{ik}-P_n=P_t-L(h,r_{ik})-P_n\geq \gamma_\textup {th}^k
\end{equation}
where $P_n$ is the noise power in dB. Clearly, the coverage condition can be equivalently defined in terms of the probabilistic mean path loss. Hence, a user $i$ of set $\mathcal U_k$ is covered if its link experiences a mean path loss less than or equal to some threshold $L_{\textup {th}}^k$, where $L_{\textup {th}}^k=P_t-P_n-\gamma_\textup {th}^k$.  

It was shown in \cite{Alzenad,Hourani} and can also be seen from (\ref{Eq:PL2}) that, for a given environment, a UAV-BS altitude and a QoS requirement $L_{\textup {th}}^k$, the coverage region is a circular disc with radius $R_k(h)=r|_{L(h,r_k)=L_{\textup {th}}^k}$. However, for multiple QoS requirements, the coverage region is no longer a single circular disc. We can see from (\ref{Eq:PL2}) that the region over which all the QoS requirements $\{L(h,r_k)\leq L_{\textup {th}}^k\}_{k=1}^K$ are satisfied forms a set of circular discs with radii $\{R_k(h)\}_{k=1}^K$ and center $(x_D,y_D)$. Obviously, the larger the required path loss threshold $L_{\textup {th}}^k$, the larger the coverage radius $R_k(h)$ is. It was shown in \cite{Alzenad} that for any QoS requirement $L_{\textup {th}}^k$, the optimal elevation angle $\theta^*$, that maximizes the coverage radius, is constant and depends only on the environment. The optimal elevation angle is given by\cite{Alzenad} 
\begin{equation}
\theta^*= \tan^{-1}(\frac{h^*_k}{R^*_k})\label{Eq:OptTheta}
\end{equation}
 where $h^*_k$ and $R^*_k$ are the optimal altitude that maximizes the coverage region and the associated maximum coverage radius, respectively, and optimal elevation angle $\theta^*=20.34^{\circ}, 42.44^{\circ}, 54.62^{\circ}$ and $75.52^{\circ}$ for the suburban, urban, dense urban and high-rise urban environments, respectively\cite{Alzenad}. For a given environment and a path loss threshold $L_{\textup{th}}^k$, the maximum coverage radius  can be evaluated by solving (\ref{Eq:PL3}). Finally, $h_k^*$ can be evaluated by solving (\ref{Eq:OptTheta}).
 
\section{Problem formulation and algorithms}
  As discussed previously, the coverage region for each set $\mathcal U_k$, denoted by $C_k$, is a circular disk with center $(x_D,y_D)$ and radius $R_k(h)$. Therefore, placing the coverage regions $\{C_k\}_{k=1}^K$ horizontally corresponds to placing the UAV-BS in the horizontal dimension. It is worth mentioning that the coverage regions $\{C_k\}_{k=1}^K$ have the same center which corresponds to the horizontal location of the UAV-BS, i.e., $(x_D,y_D)$. The user $i$ of set $\mathcal U_k$ is covered if it is located within a distance at most $R_k(h)$ from the center $(x_D,y_D)$. Let $u_{ik}\in \{0,1\}$ be a binary variable such that $u_{ik}=1$ if the user $i$ of set $\mathcal U_k$ is within the coverage region $C_k$ and $u_{ik}=0$ otherwise. This condition can be written as
\begin{equation}\label{Prb:Constraint1Before}
u_{ik}((x_{ik}-x_D)^2+(y_{ik}-y_D)^2)^\frac{1}{2}\leq R_k(h)
\end{equation}
Clearly, when $u_{ik}=1$,$((x_{ik}-x_D)^2+(y_{ik}-y_D)^2)^\frac{1}{2}\leq R_k(h)$ must be satisfied. On the other hand, when $u_{ik}=0$, the constraint (\ref{Prb:Constraint1Before}) is trivially satisfied. To avoid the multiplication of the variables $u_{ik}, x_D$ and $y_D$, we use the big-M method. The constraint (\ref{Prb:Constraint1Before}) can thus be further rewritten as 
\begin{equation}\label{Prb:ConstraintAfter}
((x_{ik}-x_D)^2+(y_{ik}-y_D)^2)^\frac{1}{2}\leq R_k(h)+M(1-u_{ik})
\end{equation}
where $M$ is a constant chosen large enough such that the constraint (\ref{Prb:ConstraintAfter}) is trivially satisfied when $u_{ik}=0$. The 3D placement problem can be formulated as
\begin{equation} \label{Prb:MainPrb}
\begin{aligned} 
&\underset{x_D,y_D,h,{u_{ik}}}{\operatorname{maximize}}  \hspace{0.2cm}   \sum_{k=1}^{K}\sum_{i\in \mathcal U_k}u_{ik} \\
&\text{subject to} \\ 
&((x_{ik}-x_D)^2+(y_{ik}-y_D)^2)^\frac{1}{2}\leq R_k(h)+M(1-u_{ik}),\\&\hspace{5cm} \forall i\in \mathcal U_k, k=1,2..K, \\
&u_{ik}\in \left\{0,1\right\}, \hspace{3cm} \forall i\in \mathcal U_k, k=1,2,..K. 
 \end{aligned}
\end{equation} 
The problem (\ref{Prb:MainPrb}) is a mixed integer non-linear problem (MINLP) which is difficult to solve. The difficulty of the problem (\ref{Prb:MainPrb}) arises due to the coupling between the vertical placement, i.e., $h$ and the horizontal placement, i.e., $(x_D,y_D)$ through the parameters $\{R_k(h)\}_{k=1}^K$. In order to simplify problem (\ref{Prb:MainPrb}), we decouple the vertical and the horizontal placements. Such decoupling can be performed by utilizing an exhaustive search for the optimal altitude that solves (\ref{Prb:MainPrb}).  
In the following lemma, we show that there exists a closed region in which the optimal altitude is guaranteed to exist.  

\begin{lemma}
Let $h_1^*$ and $R_1^*$ be the optimal altitude and the associated maximum coverage radius corresponding to the smallest path loss threshold $L_{th}^1$, respectively, and let $h_K^*$ and $R_K^*$ be the optimal altitude and the associated maximum coverage radius corresponding to the largest path loss threshold $L_{th}^K$, respectively, then $\exists h^* \in [h_1^*,h_K^*]$ that yields $N(h^*) \geq N(\bar{h}) \: \forall \: \bar{h}\not\in \: [h_1^*,h_K^*]$, where $N(h)$ is the number of covered users obtained by solving (\ref{Prb:MainPrb}) for a given $h$. 
\end{lemma}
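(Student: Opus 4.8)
The plan is to show that any altitude $\bar{h}$ lying outside $[h_1^*,h_K^*]$ can be replaced by one of the two endpoints without decreasing the number of covered users, so that the best altitude inside the interval dominates every exterior one. The whole argument rests on a single structural fact taken from the analysis preceding the lemma: for each QoS level $k$ the coverage radius $R_k(h)$ is unimodal in altitude, increasing on $[0,h_k^*]$ and decreasing on $[h_k^*,\infty)$, with its unique peak $R_k^*$ attained at $h_k^*$.

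First I would order the peak altitudes. Since the optimal elevation angle $\theta^*$ in (\ref{Eq:OptTheta}) is the same constant for every QoS level, we have $h_k^*=R_k^*\tan\theta^*$, so $h_k^*$ is proportional to $R_k^*$. Because a larger path-loss threshold yields a larger maximum radius, the ordering $L_{th}^1\le\cdots\le L_{th}^K$ gives $R_1^*\le\cdots\le R_K^*$ and hence $h_1^*\le\cdots\le h_K^*$. In particular $h_1^*$ and $h_K^*$ are the smallest and largest peak altitudes, which is exactly why they bound the search interval. The core step is then a radius-domination argument at the two endpoints. Take any $\bar{h}<h_1^*$; then $\bar{h}<h_1^*\le h_k^*$ for every $k$, so $\bar{h}$ sits on the increasing branch of each $R_k$, giving $R_k(h_1^*)\ge R_k(\bar{h})$ simultaneously for all $k$. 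Symmetrically, for any $\bar{h}>h_K^*$ we have $\bar{h}>h_K^*\ge h_k^*$ for every $k$, placing $\bar{h}$ on the decreasing branch, so $R_k(h_K^*)\ge R_k(\bar{h})$ for all $k$. In either case every coverage disc at the chosen endpoint altitude contains the corresponding disc at $\bar{h}$ whenever they share the same horizontal centre.

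It then remains to convert this disc containment into an inequality on $N$. I would let $(x_D,y_D)$ be the horizontal placement that attains $N(\bar{h})$ and reuse it at altitude $h_1^*$ (resp.\ $h_K^*$). Since each radius only grows, every user counted at $\bar{h}$ still lies inside its disc at the new altitude, so this particular placement already covers at least $N(\bar{h})$ users; because $N(h_1^*)$ (resp.\ $N(h_K^*)$) is the optimum over all horizontal placements at that altitude from (\ref{Prb:MainPrb}), we obtain $N(h_1^*)\ge N(\bar{h})$ for $\bar{h}<h_1^*$ and $N(h_K^*)\ge N(\bar{h})$ for $\bar{h}>h_K^*$. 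Finally, choosing $h^*\in\arg\max_{h\in[h_1^*,h_K^*]}N(h)$, which exists because $N$ is integer-valued and bounded by $|\mathcal U|$, yields $N(h^*)\ge N(h_1^*)\ge N(\bar{h})$ and $N(h^*)\ge N(h_K^*)\ge N(\bar{h})$, establishing the claim.

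I would expect the only genuine obstacle to be the unimodality of $R_k(h)$, since that is the sole analytic ingredient and must be extracted from the implicit definition $L(h,R_k(h))=L_{th}^k$ in (\ref{Eq:PL2}); the ordering of the peaks and the disc-containment argument are then purely elementary and geometric.
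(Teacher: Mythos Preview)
Your proposal is correct and follows essentially the same argument as the paper: unimodality of each $R_k(h)$, the ordering $h_1^*\le\cdots\le h_K^*$ obtained from the constancy of $\theta^*$, the radius-domination inequalities $R_k(h_1^*)\ge R_k(\bar h)$ for $\bar h<h_1^*$ and $R_k(h_K^*)\ge R_k(\bar h)$ for $\bar h>h_K^*$, and the conclusion via enlargement of the feasible set of (\ref{Prb:MainPrb}). Your version is simply more explicit than the paper's in spelling out the ordering of the $h_k^*$ and in justifying the implication $R_k\uparrow\Rightarrow N\uparrow$ by reusing the optimal horizontal placement; the paper compresses the latter into a one-line ``feasible region'' remark.
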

\begin{proof}
Note that $R_k(h)$ is a concave function in $h$ and has one maxima at $h_k^*$ \cite{MozaffariDrone}. Also, we note that $h_m^*<h_l^*$ if $L_{th}^m < L_{th}^l$ since from (\ref{Eq:PL2}) we have $R_m(h)< R_l(h)$ if $ L_{th}^m < L_{th}^l$ and also  $\theta^*= \tan^{-1}(\frac{h_k^*}{R_k^*})$ is constant. Therefore, $\forall \bar{h} < h_1^*$ we have $R_k(\bar{h})<R_k(h_1^*)\: \forall k \overset{(a)}{\implies} N(\bar{h}) \leq N(h_1^*)$. Similarly, $\forall \bar{h} > h_K^*$ we have $R_k(h_K^*)>R_k(\bar{h}) \: \forall k \overset{(b)}{\implies} N(h_K^*) \geq N(\bar{h})$, which completes the proof. (a) and (b) result from the fact that increasing $R_k(h)$ enlarges the feasible region of (\ref{Prb:MainPrb}) which does not decrease the optimal value of the objective function of (\ref{Prb:MainPrb}).
\end{proof}

\subsection{Exhaustive search (ES):}

The ES algorithm performs an exhaustive search for the optimal altitude $h_{\textup {E}}^*$ over the closed region $[h_1^*,h_K^*]$. For a given altitude $h_E\in [h_1^*,h_K^*]$, the associated coverage radii $\{R_k(h_{\textup {E}})\}_{k=1}^K$ are computed by solving (\ref{Eq:PL2}) numerically. Next, (\ref{Prb:GS}) is solved to find the optimal horizontal UAV-BS location
\begin{equation} \label{Prb:GS}
\begin{aligned} 
&\underset{x_D,y_D,{u_{ik}}}{\operatorname{maximize}}  \hspace{0.2cm}   \sum_{k=1}^{K}\sum_{i\in \mathcal U_k}u_{ik} \\
&\text{subject to} \\ 
&((x_{ik}-x_D)^2+(y_{ik}-y_D)^2)^\frac{1}{2}\leq R_k(h_E)+M(1-u_{ik}),\\&\hspace{5cm} \forall i\in \mathcal U_k, k=1,2..K, \\
&u_{ik}\in \left\{0,1\right\}, \hspace{3cm} \forall i\in \mathcal U_k, k=1,2,..K. 
 \end{aligned}
\end{equation} 

The problem (\ref{Prb:GS}) is a mixed integer second order cone problem (MISOCP). Such problems can be solved by branch and cut method whose worst-case complexity is $O(2^n)$ where $n$ is the number of users. At each branching node, the underlying relaxed subproblem is a SOCP which can be solved in a polynomial time using primal-dual interior point method with complexity $O(n^{3.5} \log(\varepsilon ^{-1}))$ where $\varepsilon$ is the accepted duality gap \cite{nesterov1997self}. Thus, the complexity of solving problem (\ref{Prb:GS}) is $O(2^nn^{3.5} \log(\varepsilon ^{-1}))$.

\subsection{Maximal weighted area (MWA)} 
Let us consider the case in which the users in the set $\mathcal U_k$ are uniformly distributed over the serving region with density $\lambda_k$. Given that the UAV-BS is at altitude $h$, the average number of covered users, denoted by $N_{\textup {avg}}(h)$, is then
\begin{equation}\label{AvgNumUsers}
N_{\textup {avg}}(h)= \pi \sum_{k=1}^K {\lambda_k R_k^2(h)}.
\end{equation}

Obviously, maximizing the average number of covered users $N_{\textup {avg}}(h)$ for a uniformly distributed users depends only on the UAV-BS's altitude. In order to obtain the optimal altitude $h_M^*$ that maximizes (\ref{AvgNumUsers}), we need to search for $h$ that satisfies
\begin{equation}\label{Eq:FirsDerivative1}
\frac{\partial}{\partial h} {\sum_{k=1}^K {\lambda_k R_k^2(h)}}=0,
\end{equation}
which yields the following
\begin{equation}\label{Eq:FirsDerivative2}
\sum_{k=1}^K \frac{2 \lambda_k X_k(h) R_k^2(h)}{R_k^2(h)+h^2+hX_k(h)}=0,
\end{equation}
where 
\begin{equation}\label{Eq:Xk}
\resizebox{.88\columnwidth}{!}{$X_k(h)=\frac{-9 \ln(10)A a b}{\pi} \frac{R_k(h)  \exp (-b[\frac{180}{\pi}\tan^{-1}(\frac{h}{R_k(h)})-a])}{(1+a \exp (-b[\frac{180}{\pi}\tan^{-1}(\frac{h}{R_k(h)})-a]))^2}-h$}.
\end{equation}

It can be shown that (\ref{Eq:FirsDerivative2}) has a solution in the interval $[h_1^*,h_K^*]$. However, this solution may not be unique. Clearly, (\ref{Eq:FirsDerivative2}) is an implicit function of $h$. Therefore, we need to search for $h_M^*$ that satisfies (\ref{Eq:FirsDerivative2}) numerically. 

The maximal weighted area (MWA) algorithm deploys the UAV-BS at the altitude $h_{M}^*$\footnote{The objective function in (\ref{Prb:GS}) and (\ref{Prb:MA}) is to maximize the number of covered users (not the average number of covered users) by also optimizing the UAV-BS's horizontal location which clearly results in covering more users in comparison to placing the UAV-BS randomly in the horizontal dimension. The uniform distribution is assumed in the MWA algorithm to avoid the exhaustive search on the altitude by deploying the UAV-BS at the altitude $h_{M}^*$ .}. Let $\{R_k(h_{M}^*)\}_{k=1}^K$ be the coverage radii associated with the altitude $h_{M}^*$. The problem (\ref{Prb:MainPrb}) then reduces to

\begin{equation} \label{Prb:MA}
\begin{aligned} 
&\underset{x_D,y_D,{u_{ik}}}{\operatorname{maximize}}  \hspace{0.2cm}   \sum_{k=1}^{N}\sum_{i\in \mathcal U_k}u_{ik} \\
&\text{subject to} \\ 
&((x_{ik}-x_D)^2+(y_{ik}-y_D)^2)^\frac{1}{2}\leq R_k(h_M^*)+M(1-u_{ik}),\\&\hspace{5cm} \forall i\in \mathcal U_k, k=1,2..K, \\
&u_{ik}\in \left\{0,1\right\}, \hspace{3cm} \forall i\in \mathcal U_k, k=1,2,..K, 
 \end{aligned}
\end{equation} 
which is a MISOCP.

\section{simulation results} 
  
We consider a square 3 km $\times$ 3 km urban area with parameters $a=9.61$, $b=0.16$, $\eta_{\textup{LoS}}=1$  and $\eta_{\textup{NLoS}}=20$. We also consider a UAV-BS that transmits its signal at $f_c= 2$ GHz and  $P_t=30$ dBm. We assume that there are two sets of users $\mathcal U_1$ and $\mathcal U_2$ uniformly distributed with densities $\lambda_1$ and $\lambda_2$, respectively. However, for a fair comparison, the total density of users is fixed at $\lambda=\lambda_1+\lambda_2=11$ users/$\textup {km}^2$. Furthermore, we assume that the users demand QoS defined as $\gamma_\textup {th}^1= 50$ dBm and $\gamma_\textup {th}^2= 47$ dBm for $\mathcal U_1$ and $\mathcal U_2$, respectively, with $P_n=-120$ dBm. For comparison, we assume a UAV-BS placement algorithm, namely largest QoS (LQ) algorithm. The LQ algorithm assumes that all the users have the same QoS requirement $\gamma_\textup {th}= 50$ dB, i.e., $L_{t\textup {h}}=100$ dB and the UAV-BS is therefore deployed vertically at $h_{LQ}^*=646.5$ m which results in maximal coverage radius $R^*=707$ m. The LQ algorithm is based on the observation that any user $i$ of set $\mathcal U_k$, regardless of the required QoS, falling within the coverage region that corresponds to the largest SNR threshold will be covered. 
 For the ES algorithm, we perform an exhaustive search for the optimal altitude $h_{E}^*$ over the closed region $[646.5 , 913]$ m. Furthermore, we discretize the altitude range $[646.5,913]$ m into a uniform one-dimensional grid of 9 points where the discretization step is given by $\Delta h=29.6$ m. In this letter, we use the CVX parser/solver with the MOSEK solver to solve problems (12) and (17).
\begin{figure}
\begin{center}
\captionsetup{justification=centering}
\includegraphics[ height=5.5cm, width=8cm]{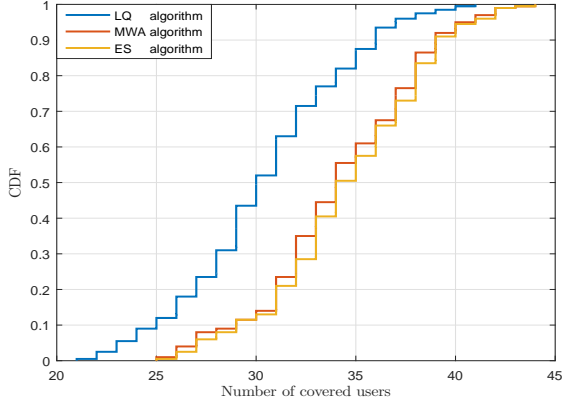}
\caption{\footnotesize CDF of the number of covered users ($\rho =1$).}
\label{fig2}
\end{center}
\end{figure}

The number of covered users and execution time are random quantities whose distributions can be measured by the cumulative distribution function (CDF). Fig. \ref{fig2} and Fig. \ref{fig3} show the CDF of the number of covered users and the CDF of the execution time for $\rho=\frac{\lambda_2}{\lambda_1}=1$, respectively. As shown in Fig. \ref{fig2}, the ES and MWA algorithms have very close performance and both outperform the LQ algorithm. However, based on Fig. \ref{fig3}, the ES algorithm has the worst execution time with a significant gap to that of the MWA and the LQ algorithms.

 
 Fig.~\ref{fig4} shows the average number of covered users versus the density ratio $\rho$. Clearly, the performance of the MWA algorithm is very close to that of the ES algorithm. It is also worth noting that as $\rho$ increases, the gap between the MWA and ES algorithms on one hand and the LQ algorithm on the other hand increases. This is because as $\rho$ increases, the number of elements in $\mathcal U_2$ ($\mathcal U_1$) increases (decreases). However, the LQ algorithm does not consider the density of the users in the set $\mathcal U_2$ which justifies the gap increase.


\section{Conclusion}
In this letter, we studied a novel 3D placement of a UAV-BS that maximizes the number of covered users with different QoS requirements. We modeled the placement problem as a multiple circles placement problem and proposed an optimal placement algorithm that utilizes an exhaustive search over a one-dimensional parameter in a closed region.  We also proposed a low-complexity algorithm, referred to as the MWA algorithm, to solve the placement problem. Simulations have shown that the MWA algorithm performs very close to the ES algorithm with a significant reduction in complexity.   


\begin{figure}
\begin{center}
\captionsetup{justification=centering}
\includegraphics[ height=5.5cm, width=8cm]{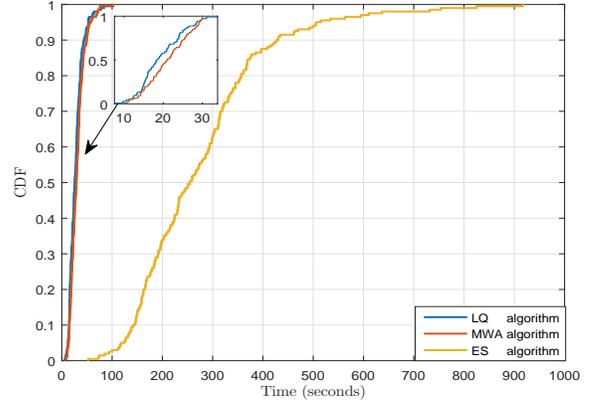}
\caption{\footnotesize CDF of execution time ($\rho =1$).}
\label{fig3}
\end{center}
\end{figure} 

 \begin{figure}
\begin{center}
\captionsetup{justification=centering}
\includegraphics[ height=5.5cm, width=8cm]{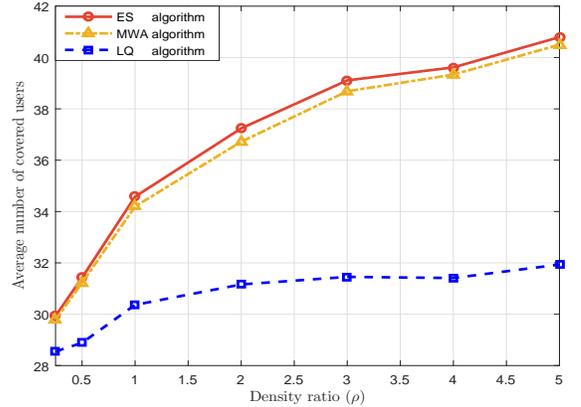}
\caption{\footnotesize Average number of covered users versus density ratio.}
\label{fig4}
\end{center}
\end{figure} 
\bibliographystyle{IEEEtran}
\bibliography{IEEEfull,UAV-BS_Placement}
\end{document}